\newcolumntype{R}{>{$}r<{$}}
\newcolumntype{C}{>{$}c<{$}}
\newtheorem{prop}{Proposition}
\def\var{\mathrm{Var}}
\def\cov{\mathrm{Cov}}
\def\argmax{\mathop{\mathrm{arg\,max}}}
\def\pdisc{p_{\mathrm{disc}}}
\def\pdisct{\widetilde{p}_\mathrm{disc}}
\begin{document}

\title{Modeling and Optimization of Genetic Screens via\\
RNA Interference and FACS}
\author{Yair Goldberg 
\and Yuval Nov 
}


\date{{\small Dept.\ of Statistics, University of Haifa, Israel} \\ \bigskip January 2014}

\maketitle

\begin{abstract}
We study mathematically a method for discovering which gene is related to a cell phenotype of interest.  The method is based on RNA interference --- a molecular process for gene deactivation --- and on coupling the phenotype with fluorescence in a FACS machine.  A small number of candidate genes are thus isolated, and then tested individually.  We model probabilistically this process, prove a limit theorem for its outcome, and derive operational guidelines for maximizing the probability of successful gene discovery.

\bigskip




\end{abstract}

\section{Introduction}

RNA interference (RNAi) is a natural biochemical process, in which small RNA molecules deactivate (or ``silence'') genes inside cells.  Biotechnological advances of the past decade allow scientists to exploit the RNAi mechanism and deactivate specific genes of their choosing, by introducing into cells appropriately designed RNAi molecules.   RNAi technology has thus become a powerful tool that revolutionized biomedical and genetic research.  For reviews, see \citet{dykxhoorn_Lieberman:silent_rev:2005, novina_sharp:RNAi_rev:2004, mohr_ea:genomic_screening:2010}.

An important application of RNAi technology is in gene function studies whose goal is to discover which gene --- henceforth termed \emph{the target gene} --- is related to a specific cell phenotype of interest.  In such a study, the experimenter can deactivate a candidate gene using the appropriate RNAi construct, and then observe whether the phenotype is altered; when it is, a relationship between the two is established.  However, since the organisms under study typically have many thousands of genes, genome-wide RNAi experiments of this type are often too expensive and laborious to be carried out on a gene-by-gene basis.

An alternative is \emph{pooled} RNAi screens, whereby a large number of RNAi constructs of various types (i.e., corresponding to various genes) are inserted randomly into a large population of cells.  We refer to a cell with at least one construct deactivating the target gene as a \emph{target cell}.  All cells then undergo selection based on the phenotype, and the abundance of each RNAi construct type among the selected cells is measured.  If the selection favors cells \emph{not} exhibiting the phenotype (so-called negative selection), it will result in enrichment of the target cells, and hence in a relatively high count of the RNAi constructs corresponding to the target gene.  A small number of genes exhibiting high RNAi counts (say, the three genes corresponding to the three highest counts) can then be validated in separate, individual (i.e., not pooled) RNAi experiments, in the hope that the target gene is among them.  Conversely, under positive selection, some low-count genes need to be validated.

In most pooled RNAi screens, the phenotype of interest directly influences the survivability of the cells, so that the desired selection takes place automatically as a result of inserting the RNAi constructs into the cells.  When this is not the case, it is sometimes possible to couple the phenotype with fluorescence, as measured in a flow cytometry experiment \citep[e.g.,][]{bassik_ea:rapid:2009, fellmann_ea:functional_identification:2011}.  In such an experiment, the cells are processed by a fluorescence-activated cell sorter (FACS), which first excites fluorescent-labeled molecules harbored in them, and then sorts the cells into two categories according to the resulting fluorescence intensity.  The coupling means that the cells in which the target gene was deactivated (the target cells) tend to exhibit stronger fluorescence, so the entire process results in enrichment of the target cells.  The relative abundance of the various construct types among the selected cells can then be measured, and a small number of genes corresponding to the top counts can be further validated, as described above.

In this work we model probabilistically and optimize this FACS-aided pooled RNAi experiment.  The main decision point in our analysis concerns the FACS selection criterion: if too many cells pass the selection, no detectable signal for the target gene will emanate from the construct counts; if the selection is too stringent, few or no target cells will be selected, resulting again in a failure to discover the target gene.

Several studies described statistical methodologies for analyzing RNAi experiments \citep{konig_ea:prob_based:2007, birmingham_ea:statistical_methods:2009, bassik_ea:rapid:2009, rieber_ea:RNAither:2009, siebourg_ea:stability:2012, hao_ea:limited_agreement:2013}.  Our approach differs from those pursued in these studies, in that it models probabilistically the experiment starting from its fundamentals (e.g., the distribution of the number of constructs inserted to a cell, and the distribution of a cell's fluorescence intensity), rather than being data driven.  This approach allows us to study analytically how changing the FACS selection criterion affects the probability of discovering the target gene, and thus to optimize the process with respect to this probability.


\section{Model and Notation}

Let $r$ be the number of genes considered.  These genes may constitute the entire genome of the organism under study, or some sizable subset thereof (e.g., all genes related to signaling pathways).  We index the genes by $i = 1,\ldots,r$, and, without loss of generality, designate the index $i = 1$ to the target gene.  Let $n$ be the number of cells sorted by  FACS, indexed by $k = 1,\ldots,n$.  In a typical experiment $r$ is in the thousands and $n$ is in the millions.

Define $N_{k,i}$ to be the number of constructs of type $i$ inserted into cell $k$, so that the target cells (those in which the target gene was deactivated) are those satisfying $N_{k,1} \geq 1$.  Let $F_k$ be the fluorescence intensity of cell $k$, and denote by $G_1$ and $G_2$ the cumulative distribution functions (CDFs) of the fluorescence intensity of the target and non-target cells, respectively.  Then,
\begin{equation}\label{eq:F_CDF}
P(F_k \leq a) = 
	\begin{cases}
		G_1(a) & N_{k,1} \geq 1,\\
		G_2(a) & N_{k,1} = 0.
	\end{cases}
\end{equation}
It is assumed that $G_1$ is larger than $G_2$ in some sense (e.g., via the usual stochastic order, whereby $G_1(a) \leq G_2(a)$ for all $a$).  Also define
\begin{equation*}
	\overline{G}_1(a) = 1 - G_1(a),	\qquad \overline{G}_2(a) = 1 - G_2(a).
\end{equation*}

The experimenter may define the FACS selection criterion either through a percentile (i.e., the selected cells are the top $t$ percents of the cells, in terms of their fluorescence intensity, for some $t \in (0,1)$) or through a fixed threshold (i.e., the selected cells are those whose fluorescence intensity exceeds some threshold $\alpha$).  The latter criterion is more tractable mathematically, so we adopt it henceforth.  The resulting \emph{construct count} corresponding to gene $i$ is therefore
\begin{equation*}
	X_i = \sum_{k=1}^n N_{k,i} I_{\{F_k > \alpha\}}, \quad i = 1,\ldots,r.
\end{equation*}
The analysis below relies on the behavior of $M_{k,i} = N_{k,i} I_{\{F_k > \alpha\}}$, which is the contribution of cell $k$ to the construct count $X_i$.

We consider two models for the process of inserting the RNAi constructs into the cells: the multinomial model, which is simpler to analyze, and the Poisson model, which is more realistic.

\subsection{The Multinomial Model}  

In the multinomial model it is assumed that each cell always admits a single construct, so that $N_k = \sum_{i=1}^r N_{k,i} = 1$ for each cell $k = 1,\ldots,n$.  The type of the construct in each cell is equally likely to be any of the $r$ possible construct types, independent of other cells.

Define $X_0 = \sum_{k=1}^n I_{\{F_k < \alpha\}}$ to be the number of cells not selected by the FACS.  Then, the joint distribution of the $X_i$ is multinomial:
\begin{equation*}
	(X_0, X_1, X_2, \ldots, X_r) \sim \mathrm{Mult}(n, p_0, p_1, p_2, \ldots, p_r),
\end{equation*}
where
\begin{equation*}
	p_0 = \frac1{r}G_1(\alpha) + \frac{r-1}{r}G_2(\alpha), \qquad	p_1 = \frac1{r}\overline{G}_1(\alpha), \qquad p_i = \frac1{r}\overline{G}_2(\alpha), \quad i \geq 2.
\end{equation*}

It is easily verified that under the multinomial model, for any power $m > 0$,
\begin{equation} \label{eq:EM}
	E[(M_{k,1})^m] = \frac1{r}\overline{G}_1(\alpha),	\qquad\quad E[(M_{k,i})^m] = \frac1{r}\overline{G}_2(\alpha), \quad i \geq 2 
\end{equation}
and
\begin{align}
	\var(M_{k,1}) &= \frac1{r}\overline{G}_1(\alpha)\bigl(1 - \frac1{r}\overline{G}_1(\alpha)\bigr) \label{eq:VarM} \\
	\var(M_{k,i}) &= \frac1{r}\overline{G}_2(\alpha)\bigl(1 - \frac1{r}\overline{G}_2(\alpha)\bigr), \qquad i \geq 2 \\
	\cov(M_{k,1}, M_{k,i}) &= -\frac1{r^2}\overline{G}_1(\alpha)\overline{G}_2(\alpha), \qquad i \geq 2 \\
	\cov(M_{k,i}, M_{k,j}) &= -\frac1{r^2}\overline{G}_2(\alpha)^2, \quad i,j \geq 2, \quad i \neq j. \label{eq:CovM}
\end{align}

\subsection{The Poisson Model}
Under the Poisson model, the process of preparing the cells with the constructs for the FACS is comprised of two steps.  In the first step, each cell admits a Poisson number of constructs, with parameter $\lambda$ that is called ``multiplicity of infection.''  The value of $\lambda$ is typically low, in the range 0.1--1, and we treat it as exogenously given, rather than as a decision variable.  As in the multinomial model, the type of each construct is assumed to be drawn uniformly from $\{1,\ldots,r\}$, independent of other constructs.  Because the  support of the Poisson distribution includes the value 0, a cell may contain no constructs, in which case it will contribute no useful information for the experiment.  To avoid this, in the second step, all cells having no constructs are eliminated, and only those with at least one construct are processed by the FACS machine.  Thus, the total number of constructs per cell has a Poisson distribution truncated below 1.  

\begin{prop}\label{prop:poisson_moments}  In the Poisson model, the contribution $M_{k,i}$ of cell $k$ to the construct count $X_i$ satisfies
\begin{align*}
	E(M_{k,1}) &= \frac{\lambda c_1}{r(1 - e^{-\lambda})} \\
	E(M_{k,i}) &= \frac{\lambda c_2}{r(1 - e^{-\lambda})}, \qquad i \geq 2 \\
	\var(M_{k,1}) &= \frac{c_1}{1 - e^{-\lambda}}\Bigl(\frac{\lambda}{r} + \frac{\lambda^2}{r^2}\Bigr) - \Bigl(\frac{\lambda c_1}{r(1 - e^{-\lambda})}\Bigr)^2 \\
	\var(M_{k,i}) &= \frac{c_2}{1 - e^{-\lambda}}\Bigl(\frac{\lambda}{r} + \frac{\lambda^2}{r^2}\Bigr) - \Bigl(\frac{\lambda c_2}{r(1 - e^{-\lambda})}\Bigr)^2, \qquad i \geq 2 \\
	\cov(M_{k,1}, M_{k,i}) &= \frac{c_1\lambda^2}{r^2(1 - e^{-\lambda})} - \frac{c_1 c_2\lambda^2}{r^2(1 - e^{-\lambda})^2} \qquad i \geq 2 \\
	\cov(M_{k,i}, M_{k,j}) &= \frac{c_2\lambda^2}{r^2(1 - e^{-\lambda})} - \frac{c_2^2\lambda^2}{r^2(1 - e^{-\lambda})^2} \qquad i,j \geq 2, \quad i \neq j \\
	E[(M_{k,1})^3] &= \frac{c_1}{1 - e^{-\lambda}}\Bigl[\frac{\lambda}{r} + 3\Bigl(\frac{\lambda}{r}\Bigr)^2 + \Bigl(\frac{\lambda}{r}\Bigr)^3\Bigr] \\
	E[(M_{k,i})^3] &= \frac{c_2}{1 - e^{-\lambda}}\Bigl[\frac{\lambda}{r} + 3\Bigl(\frac{\lambda}{r}\Bigr)^2 + \Bigl(\frac{\lambda}{r}\Bigr)^3\Bigr] \qquad i \geq 2.
\end{align*}
where
\begin{align*}
	c_1 &= c_1(\alpha) = \overline{G}_1(\alpha), \\
	c_2 &= c_2(\alpha) = \overline{G}_2(\alpha)e^{-\lambda/r} + \overline{G}_1(\alpha)(1 - e^{-\lambda/r}).
\end{align*}
\end{prop}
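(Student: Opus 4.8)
The plan is to exploit the Poissonization structure of the model. Before the truncation step the total count $N_k \sim \mathrm{Poisson}(\lambda)$, and because each construct is independently assigned a uniform type in $\{1,\ldots,r\}$, the per-type counts $N_{k,1},\ldots,N_{k,r}$ are mutually independent with $N_{k,i} \sim \mathrm{Poisson}(\lambda/r)$. Write $\mu = \lambda/r$ and $J = I_{\{F_k > \alpha\}}$. Two elementary facts will drive every computation. First, since $J$ takes values in $\{0,1\}$ we have $J^m = J$, so $(M_{k,i})^m = (N_{k,i})^m J$ for every $m \geq 1$, and mixed products reduce similarly (e.g.\ $M_{k,1}M_{k,i} = N_{k,1}N_{k,i}J$). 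Second, by the model in display \eqref{eq:F_CDF} the conditional law of $J$ given the entire count vector depends only on whether the cell is a target cell: $E[J \mid N_{k,1},\ldots,N_{k,r}] = \overline{G}_1(\alpha)$ on $\{N_{k,1} \geq 1\}$ and $\overline{G}_2(\alpha)$ on $\{N_{k,1} = 0\}$. All expectations are taken conditional on the truncation event $\{N_k \geq 1\}$, whose probability $1 - e^{-\lambda}$ supplies the common denominator. The organizing device is an \emph{absorption principle}: whenever a factor $(N_{k,i})^m$ with $m \geq 1$ multiplies the indicator of an event implied by $\{N_{k,i} \geq 1\}$ (in particular $\{N_k \geq 1\}$), that indicator may be dropped, since the product vanishes on its complement.

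For the target gene I would compute $E[(M_{k,1})^m \mid N_k \geq 1]$. Conditioning on the counts and invoking the second fact kills the $\{N_{k,1}=0\}$ branch, because $(N_{k,1})^m = 0$ there; what remains is $\overline{G}_1(\alpha) = c_1$ times $E[(N_{k,1})^m I_{\{N_k \geq 1\}}]$, and absorption removes the indicator, leaving the unconditional Poisson$(\mu)$ moment $c_1\,E[(N_{k,1})^m]$. Dividing by $1 - e^{-\lambda}$ and inserting the raw Poisson moments $\mu$, $\mu + \mu^2$, and $\mu + 3\mu^2 + \mu^3$ yields the stated first, second, and third-moment formulas; the variance of $M_{k,1}$ then follows by subtracting the square of the mean.

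For a non-target gene $i \geq 2$ the fluorescence indicator depends on $N_{k,1}$, which is independent of $N_{k,i}$ before truncation but coupled to it through $\{N_k \geq 1\}$. I would split the conditional expectation over $\{N_{k,1} \geq 1\}$ and $\{N_{k,1} = 0\}$. On the first event $\{N_k \geq 1\}$ is automatic, and independence of $N_{k,1}$ from $N_{k,i}$ gives $\overline{G}_1(\alpha)\,E[(N_{k,i})^m]\,(1 - e^{-\mu})$; on the second, conditioning on $N_{k,1}=0$ and absorbing $\{N_k \geq 1\}$ into the surviving factor $(N_{k,i})^m$ gives $\overline{G}_2(\alpha)\,e^{-\mu}\,E[(N_{k,i})^m]$. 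Adding these and recognizing the bracketed combination as precisely $c_2 = \overline{G}_2(\alpha)e^{-\lambda/r} + \overline{G}_1(\alpha)(1 - e^{-\lambda/r})$ produces $E[(M_{k,i})^m] = c_2\,E[(N_{k,i})^m]/(1 - e^{-\lambda})$, from which all the $i \geq 2$ moment and variance formulas drop out.

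The covariance terms go through the identical machinery applied to the mixed second moments. For $\cov(M_{k,1},M_{k,i})$ one evaluates $E[N_{k,1}N_{k,i}J]$: the $\{N_{k,1}=0\}$ branch vanishes, absorption clears $\{N_k\geq 1\}$, and independence gives $E[N_{k,1}N_{k,i}] = \mu^2$, so $E[M_{k,1}M_{k,i}] = c_1\mu^2/(1-e^{-\lambda})$. For $\cov(M_{k,i},M_{k,j})$ with $i,j \geq 2$ one repeats the two-branch split of the previous paragraph on $E[N_{k,i}N_{k,j}J]$, the bracket again collapsing to $c_2$ and yielding $c_2\mu^2/(1-e^{-\lambda})$. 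Subtracting the products of the means already computed gives the two covariance formulas. I expect the only genuine care to lie in the bookkeeping of the non-target split—correctly identifying which indicators are absorbed and checking that the two fluorescence branches recombine into the single constant $c_2$—rather than in any conceptual obstacle.
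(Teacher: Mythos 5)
Your proposal is correct and follows essentially the same route as the paper: both rest on the Poissonization of the pre-truncation counts into independent $\mathrm{Poisson}(\lambda/r)$ variables, the identification of the conditional selection probabilities $c_1$ (on $\{N_{k,1}\geq 1\}$) and $c_2$ (via the split over $\{N_{k,1}=0\}$ versus $\{N_{k,1}\geq 1\}$), and division by $P(\widehat{N}_k \geq 1) = 1 - e^{-\lambda}$. The only difference is one of bookkeeping --- the paper writes down the explicit probability mass functions of the $M_{k,i}$ and sums, whereas you reach the moments directly through the absorption identity $(N_{k,i})^m I_{\{N_k \geq 1\}} = (N_{k,i})^m$ and the unconditional Poisson raw moments --- and your variant is, if anything, slightly leaner.
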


\begin{proof}
Because of the elimination of cells having zero constructs, the counts $N_{k,1}, \ldots, N_{k,r}$ at each cell $k$ satisfy
\begin{equation*}
(N_{k,1},\ldots,N_{k,r}) \stackrel{d}{=} (\widehat{N}_{k,1},\ldots,\widehat{N}_{k,r}) \,|\, \widehat{N}_k \geq 1, \qquad k = 1,\ldots,n,
\end{equation*}
where $\stackrel{d}{=}$ denotes equality in distribution, $\widehat{N}_{k,1}, \ldots, \widehat{N}_{k,r}$ are independent $\text{Poisson}(\lambda/r)$ random variables, and $\widehat{N}_k = \sum_{i=1}^r \widehat{N}_{k,i} \sim \mathrm{Poisson}(\lambda)$.  

The distribution of a cell's fluorescence depends only on the presence of constructs of type 1 (recall equation \eqref{eq:F_CDF}), so for $x \geq 1$ we have
\begin{equation}\label{eq:beta}
	P(F_k > \alpha \,|\, \widehat{N}_{k,1} = x) = \overline{G}_1(\alpha) = c_1,
\end{equation}
and for $i \geq 2$, since $\widehat{N}_{k,1}$ and $\widehat{N}_{k,i}$ are independent,
\begin{align}\label{eq:gamma}
	P(F_k > \alpha \,|\, \widehat{N}_{k,i} = x) &= P(F_k > \alpha \,|\, \widehat{N}_{k,i} = x, \widehat{N}_{k,1} = 0)P(\widehat{N}_{k,1} = 0 \,|\, \widehat{N}_{k,i} = x) \notag\\
	& \qquad + P(F_k > \alpha \,|\, \widehat{N}_{k,i} = x, \widehat{N}_{k,1} \geq 1)P(\widehat{N}_{k,1} \geq 1 \,|\, \widehat{N}_{k,i} = x) \notag\\
	&= P(F_k > \alpha \,|\, \widehat{N}_{k,1} = 0)P(\widehat{N}_{k,1} = 0) + P(F_k > \alpha \,|\, \widehat{N}_{k,1} \geq 1)P(\widehat{N}_{k,1} \geq 1) \notag\\
	&= \overline{G}_2(\alpha)e^{-\lambda/r} + \overline{G}_1(\alpha)(1 - e^{-\lambda/r}) \notag\\
	&= c_2.
\end{align}

Using~\eqref{eq:beta}, we have for $x \geq 1$,
\begin{align*}
	P(M_{k,1} = x) &= P(N_{k,1} = x, F_k > \alpha) \\
	&= P(\widehat{N}_{k,1} = x, F_k > \alpha \,|\, \widehat{N}_k \geq 1) \\
	&= \frac{P(\widehat{N}_{k,1} = x, F_k > \alpha)}{P(\widehat{N}_k \geq 1)} \\
	&= \frac{P(\widehat{N}_{k,1} = x)P(F_k > \alpha \,|\, \widehat{N}_{k,1} = x)}{P(\widehat{N}_k \geq 1)} \\
	&= \frac{e^{-\lambda/r}(\lambda/r)^x c_1}{x!(1 - e^{-\lambda})}.
\end{align*}
Similarly, using~\eqref{eq:gamma}, for $i,j \geq 2$, $i \neq j$ and $x,y \geq 1$ we get
\begin{align*}
	P(M_{k,i} = x) &= \frac{e^{-\lambda/r}(\lambda/r)^x c_2}{x!(1 - e^{-\lambda})} \\
	P(M_{k,1} = x, M_{k,i} = y) & = \frac{e^{-2\lambda/r}(\lambda/r)^{x+y} c_1}{x!y!(1 - e^{-\lambda})} \\
	P(M_{k,i} = x, M_{k,j} = y) & = \frac{e^{-2\lambda/r}(\lambda/r)^{x+y} c_2}{x!y!(1 - e^{-\lambda})}.
\end{align*}

Computing now the moments through their basic definitions --- e.g., $E(M_{k,1}) = \sum_{x=1}^\infty xP(M_{k,1} = x)$ --- the proposition is proved.

\end{proof}

When $\lambda$ is near zero, there is a low probability that a cell will admit two constructs or more.  Because cells with no constructs are eliminated, the probability in this case of having eventually a single construct is close to 1, similar to the multinomial model, in which there is always a single construct in each cell.  The next proposition formalizes this observation, and asserts that the entire Poisson model converges in distribution to the multinomial model as $\lambda \to 0$.  This proposition is the only place in this work in which the multinomial and the Poisson models are considered simultaneously; to distinguish between them notationally, we attach a superscript $(\lambda)$ to all random variables related to the Poisson model.

\begin{prop} Let $X_1, X_2, \ldots, X_r$ denote the construct counts under the multinomial model, and let $X^{(\lambda)}_1, X^{(\lambda)}_2, \ldots, X^{(\lambda)}_r$ denote the construct counts under the Poisson model.  Then,
\begin{equation*}
(X^{(\lambda)}_1, X^{(\lambda)}_2, \ldots, X^{(\lambda)}_r) \Rightarrow (X_1, X_2, \ldots, X_r) \qquad\text{as } \lambda \to 0.
\end{equation*}
\end{prop}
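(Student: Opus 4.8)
The plan is to reduce the $r$-dimensional claim to a single-cell statement and then exploit the independence of the $n$ cells. In both models the construct-count vector decomposes as $(X_1,\ldots,X_r) = \sum_{k=1}^n (M_{k,1},\ldots,M_{k,r})$, a sum of $n$ independent and identically distributed contribution vectors, one per cell, with $n$ a fixed finite integer. Since weak convergence of laws is preserved under finite products, marginal convergence of the (identically distributed) per-cell vectors together with independence across cells yields joint convergence of the whole $n$-tuple of contribution vectors; applying the continuous mapping theorem to the (continuous) summation map $(\mathbb{Z}_{\geq 0}^r)^n \to \mathbb{Z}_{\geq 0}^r$ then delivers the stated convergence of $(X^{(\lambda)}_1,\ldots,X^{(\lambda)}_r)$. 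Thus it suffices to prove the single-cell statement
\[
(M^{(\lambda)}_{k,1},\ldots,M^{(\lambda)}_{k,r}) \Rightarrow (M_{k,1},\ldots,M_{k,r}) \qquad \text{as } \lambda \to 0.
\]

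Because each contribution vector lives on the nonnegative integer lattice $\mathbb{Z}_{\geq 0}^r$, and the prospective limit (the multinomial single-cell law) is a genuine probability distribution, weak convergence follows from pointwise convergence of the probability mass function by Scheffé's lemma. So the second step is to take $\lambda \to 0$ in the single-cell masses recorded in the proof of Proposition~\ref{prop:poisson_moments}. Here $c_1 = \overline{G}_1(\alpha)$ is free of $\lambda$, while $c_2 = \overline{G}_2(\alpha)e^{-\lambda/r} + \overline{G}_1(\alpha)(1 - e^{-\lambda/r}) \to \overline{G}_2(\alpha)$. Using $1 - e^{-\lambda} \sim \lambda$ and $e^{-\lambda/r} \to 1$, the formula $P(M^{(\lambda)}_{k,1} = x) = e^{-\lambda/r}(\lambda/r)^x c_1 / [x!(1-e^{-\lambda})]$ tends to $\overline{G}_1(\alpha)/r = p_1$ at $x = 1$, whereas for every $x \geq 2$ the surviving factor $(\lambda/r)^{x-1} \to 0$ kills the mass. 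The analogous $i \geq 2$ masses tend to $\overline{G}_2(\alpha)/r = p_i$ at $x = 1$ and to $0$ for $x \geq 2$, and the joint masses $P(M^{(\lambda)}_{k,1}=x,M^{(\lambda)}_{k,i}=y)$ and $P(M^{(\lambda)}_{k,i}=x,M^{(\lambda)}_{k,j}=y)$, each carrying $(\lambda/r)^{x+y}$ against a single $1-e^{-\lambda}\sim\lambda$, all vanish. Hence the limit concentrates on the zero vector and the $r$ unit vectors, with unit vector $i$ carrying mass $p_i$ — precisely the multinomial single-cell law, in which $M_{k,i} = I_{\{J=i\}}I_{\{F_k > \alpha\}}$ for the uniform type $J$ and selection governed by \eqref{eq:F_CDF}.

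Conceptually the whole limit is driven by the elementary fact $P(\widehat{N}_k = 1 \mid \widehat{N}_k \geq 1) = \lambda e^{-\lambda}/(1 - e^{-\lambda}) \to 1$: conditioned on surviving the truncation, a cell holds exactly one construct with probability approaching one, and on that event the Poisson mechanism (uniform type, then fluorescence selection) coincides verbatim with the multinomial mechanism. I expect the only delicate point to be bookkeeping rather than depth: one must check that all the ``bad'' configurations (a coordinate exceeding $1$, or two coordinates simultaneously positive) carry vanishing mass so that no probability leaks away from the zero-vector/unit-vector support, and that the residual zero-vector mass converges to the correct $p_0$. Both are automatic once the pointwise limits above are established, since the limit is a proper law on a discrete space and the masses must sum to $1$; Scheffé's lemma then upgrades the pointwise PMF convergence to weak convergence at the single-cell level, and the independence-plus-continuous-mapping argument of the first paragraph completes the proof.
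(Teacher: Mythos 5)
Your proof is correct, and it rests on the same central fact as the paper's --- that $P(\widehat{N}_k = 1 \mid \widehat{N}_k \geq 1) = \lambda e^{-\lambda}/(1-e^{-\lambda}) \to 1$ --- but the execution is genuinely different and noticeably more complete. The paper's proof establishes only that the per-cell \emph{total} construct count $N_k^{(\lambda)}$ converges in distribution to the constant $1$, and then appeals to the Continuous Mapping Theorem without identifying the map or explaining how convergence of the totals yields convergence of the full $r$-dimensional count vector (one still has to argue that, on the event of a single construct, the type-assignment and fluorescence mechanisms of the two models coincide, and that the contributions of the $n$ cells can be assembled jointly). You instead work directly with the per-cell contribution vectors $(M_{k,1}^{(\lambda)},\ldots,M_{k,r}^{(\lambda)})$, compute the pointwise limits of their masses from the formulas already derived in the proof of Proposition~\ref{prop:poisson_moments}, upgrade pointwise PMF convergence to weak convergence via Scheff\'{e}'s lemma (the limit being a proper law on a discrete space, so no mass escapes), and then combine the $n$ independent cells through a product-plus-summation continuous-mapping step. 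Your computations check out: at $x=1$ the mass tends to $c_1/r = p_1$ (resp.\ $c_2/r \to p_i$), for $x \geq 2$ and for all joint masses the extra powers of $\lambda$ kill the limit, and the residual mass at the zero vector is forced to be $p_0$. What your route buys is rigor --- it makes explicit exactly where the multinomial single-cell law emerges --- at the cost of more bookkeeping; what the paper's route buys is brevity, at the cost of leaving the passage from the scalar limit $N_k^{(\lambda)} \Rightarrow 1$ to the vector-valued conclusion essentially unexplained. One small presentational point: when you assert that the limit puts mass $p_i$ on the unit vector $e_i$, it is worth stating in one line that $P(M_{k,i}^{(\lambda)}=1,\ \text{some other coordinate} \geq 1) \to 0$ is what converts the marginal limit $P(M_{k,i}^{(\lambda)}=1) \to p_i$ into the joint statement; you clearly know this (it is your ``bad configurations'' remark), but it is the one place a reader must pause.
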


\begin{proof}
As in the proof of Proposition~\ref{prop:poisson_moments}, let $\widehat{N}_k^{(\lambda)} = \sum_{i=1}^r \widehat{N}^{(\lambda)}_{k,i} \sim \mathrm{Poisson}(\lambda)$ be the total number of constructs inserted into cell $k$ \emph{before} eliminating the empty cells, under the Poisson model. Using l'Hospital's rule, we have
\begin{align*}
	\lim_{\lambda \to 0} P(N_k^{(\lambda)} = 1) &= \lim_{\lambda \to 0} P(\widehat{N}_k^{(\lambda)} = 1 \,|\, \widehat{N}_k^{(\lambda)} \geq 1) \\
	&= \lim_{\lambda \to 0} \frac{P(\widehat{N}_k^{(\lambda)} = 1)}{P(\widehat{N}_k^{(\lambda)} \geq 1)} \\
	&= \lim_{\lambda \to 0} \frac{\lambda e^{-\lambda}}{1 - e^{-\lambda}} \\
	&= \lim_{\lambda \to 0} \frac{e^{-\lambda} - \lambda e^{-\lambda}}{e^{-\lambda}} \\
	&= 1.
\end{align*}
Thus, $N_k^{(\lambda)} \Rightarrow 1$ as $\lambda \to 0$ for each cell $k$.  The result now follows from the Continuous Mapping Theorem.
\end{proof}

\subsection{Maximizing the Probability of Discovery}

Under either the multinomial or the Poisson model, let $X_{[1]} \geq X_{[2]} \geq \ldots \geq X_{[r-1]}$ be the order statistics of the $r-1$ non-target construct counts $X_2,\ldots,X_r$, sorted from largest to smallest.  Also let $v$ denote the number of genes to be validated; this number is assumed to be exogenously given, according to budget constraints.  The target gene is discovered in the experiment if it is among the $v$ genes that are validated, an event that occurs if the construct count of the target gene is among the $v$ top counts.  Mathematically, the \emph{probability of discovery} is 
\begin{equation}\label{eq:pdisc}
\pdisc(\alpha) = P(X_{[v]} < X_1).
\end{equation}
Our goal is to find a threshold $\alpha^\ast$ that maximizes this probability, i.e., that satisfies
\begin{equation*}
\alpha^\ast = \argmax_\alpha \pdisc(\alpha).
\end{equation*}

\section{Asymptotic Analysis}

Let $n$, the number of cells, approach infinity, and assume that the number of genes grows to infinity with $n$, i.e., $r = r(n) \to \infty$ as $n \to \infty$.  We attach a superscript $n$ to all random variables defined above, so that $X^n_i = \sum_{k=1}^n M^n_{k,i}$ is the construct count corresponding to gene $i$ in the $n$th system.  Let
\begin{equation}\label{eq:Yn}
	Y_i^n = \frac{X_i^n - E(X_i^n)}{\sqrt{\var(X_i^n)}},  \qquad i = 1,2,\ldots,r(n),
\end{equation}
be the normalized construct counts, and define the process
\begin{equation}\label{eq:Yn_vector}
	\mathbf{Y}^n = (Y_1^n, Y_2^n, \ldots, Y_{r(n)}^n, 0, 0, \ldots).
\end{equation}

The following result asserts that if $r(n)$ approaches infinity slower than $n$, then the scaled construct counts are asymptotically normal and independent.

\begin{prop}\label{prop:normal_approx}
Under both the multinomial and the Poisson models, and for fixed $\alpha$, if $r(n) \to \infty$ and $n/r(n) \rightarrow \infty$ as $n \to \infty$, then 
\begin{equation*}
	\mathbf{Y}^n \Rightarrow (Z_1, Z_2,\ldots) \quad\text{as } n \to \infty,
\end{equation*}
where the $Z_i$ are independent standard normal random variables.
\end{prop}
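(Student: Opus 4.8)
The plan is to reduce the infinite-dimensional assertion to a sequence of finite-dimensional central limit theorems. Since convergence in distribution in $\mathbb{R}^\infty$ under the product topology is equivalent to convergence of all finite-dimensional marginals, it suffices to fix an integer $m$ and show, for all $n$ large enough that $r(n) > m$, that $(Y_1^n,\ldots,Y_m^n) \Rightarrow (Z_1,\ldots,Z_m)$ with the $Z_i$ independent standard normal. The identity covariance of the limit encodes both normality and asymptotic independence simultaneously. By the Cram\'er--Wold device, this reduces further to showing that for every fixed $(a_1,\ldots,a_m) \in \mathbb{R}^m$,
\begin{equation*}
	\sum_{i=1}^m a_i Y_i^n \Rightarrow N\Bigl(0, \textstyle\sum_{i=1}^m a_i^2\Bigr).
\end{equation*}

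I would then express the left-hand side as a sum of independent cell contributions. Writing $W_k^n = \sum_{i=1}^m a_i \bigl(M_{k,i}^n - E M_{k,i}^n\bigr)/\sqrt{\var(X_i^n)}$ and recalling that $X_i^n = \sum_{k=1}^n M_{k,i}^n$ with the cells mutually independent, we have $\sum_{i=1}^m a_i Y_i^n = \sum_{k=1}^n W_k^n$, a row sum of a triangular array whose entries are, for fixed $n$, i.i.d.\ in $k$ with mean zero. The target is thus the Lindeberg--Feller theorem, which requires two ingredients: convergence of the total variance to $\sum_i a_i^2$, and the Lindeberg condition. For the variance, independence of the cells gives $\var\bigl(\sum_i a_i Y_i^n\bigr) = \sum_i a_i^2 + 2\sum_{i<j} a_i a_j\, \rho_{ij}^n$, where $\rho_{ij}^n = \cov(M_{k,i}^n, M_{k,j}^n)/\sqrt{\var(M_{k,i}^n)\var(M_{k,j}^n)}$; the factor $n$ cancels, so these correlations depend on $n$ only through $r(n)$. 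Substituting the moment formulas (equations~\eqref{eq:VarM}--\eqref{eq:CovM} in the multinomial model, and Proposition~\ref{prop:poisson_moments} in the Poisson model), a direct computation gives $\rho_{ij}^n = O(1/r(n)) \to 0$, so the total variance tends to $\sum_i a_i^2$. This step uses only $r(n) \to \infty$.

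The main obstacle is the Lindeberg condition, and this is precisely where the hypothesis $n/r(n) \to \infty$ is needed. Since the $W_k^n$ are i.i.d.\ in $k$, the Lindeberg sum equals $n\, E\bigl[(W_1^n)^2 I_{\{|W_1^n| > \epsilon\}}\bigr]$. The governing observation is that each summand in $W_1^n$ is of order $\sqrt{r(n)/n}$: indeed $\var(M_{k,i}^n)$ is of order $1/r(n)$ by~\eqref{eq:EM}--\eqref{eq:VarM} and Proposition~\ref{prop:poisson_moments}, so $\sqrt{\var(X_i^n)} = \sqrt{n\,\var(M_{k,i}^n)}$ is of order $\sqrt{n/r(n)} \to \infty$. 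In the multinomial model $M_{k,i}^n \in \{0,1\}$, so $|W_1^n|$ is bounded by a quantity of order $\sqrt{r(n)/n} \to 0$; hence for all large $n$ the indicator vanishes identically and the Lindeberg sum is exactly zero. In the Poisson model $M_{k,i}^n$ is unbounded, so I would instead verify the stronger Lyapunov condition $n\, E|W_1^n|^3 \to 0$. Applying Minkowski's inequality to $W_1^n$ together with the third-moment formulas of Proposition~\ref{prop:poisson_moments} (which yield $E|M_{k,i}^n - E M_{k,i}^n|^3 = O(1/r(n))$), one finds $\|W_1^n\|_3 = O\bigl(r(n)^{1/6} n^{-1/2}\bigr)$, whence $n\, E|W_1^n|^3 = O\bigl((r(n)/n)^{1/2}\bigr) \to 0$.

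With the limiting variance identified and the Lindeberg condition confirmed in both models, the Lindeberg--Feller theorem yields the required one-dimensional convergence; Cram\'er--Wold upgrades this to convergence of the $m$-dimensional marginal to $N(0, I_m)$, and letting $m$ range over all integers gives convergence of every finite-dimensional distribution, which is the claim. The only genuinely delicate point is the Lyapunov estimate in the Poisson case, where the interplay of the three scalings $\var(M_{k,i}^n) \asymp 1/r(n)$, the $n$-fold summation, and the divergence $n/r(n) \to \infty$ must be tracked carefully; the remaining steps are standard or mechanical.
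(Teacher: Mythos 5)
Your proposal is correct and follows essentially the same route as the paper's proof: reduction to finite-dimensional marginals, the Cram\'er--Wold device, and a triangular-array CLT in which the limiting variance $\sum_i a_i^2$ and the Lyapunov third-moment bound of order $(r(n)/n)^{1/2}$ are both driven by the scalings $\var(M_{k,i}^n)\asymp 1/r(n)$ and $n/r(n)\to\infty$. The only (harmless) deviation is that for the multinomial model you dispose of the Lindeberg condition directly via boundedness of the summands, whereas the paper applies the Lyapunov $\delta=1$ condition uniformly to both models.
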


\begin{proof}
By theorem 1.4.8 of \citet{vandervaart_wellner:weak_conv:1996}, it is enough to prove finite-dimensional convergence, i.e., to show that for each $d \in \mathbb{N}$,
\begin{equation*}
	Y^n = (Y_1^n, \ldots, Y_d^n)^T \Rightarrow (Z_1,\ldots, Z_d)^T.
\end{equation*}
By the Cram\'{e}r--Wold theorem, it needs to be shown that for each $a \in \mathbb{R}^d$,
\begin{equation}\label{eq:d_univariate}
	a^TY^n \Rightarrow N(0, a^Ta).
\end{equation}

Define
\begin{equation*}
	Q_{k,i}^n = \frac{M_{k,i}^n - E(M_{k,i}^n)}{\sqrt{n\var(M_{k,i}^n)}}
\end{equation*}
Then,
\begin{align}
	E(Q_{k,i}^n) &= 0 \label{eq:EQ} \\
	\var(Q_{k,i}^n) &= 1/n \\
	\cov(Q_{k,1}^n, Q_{k,i}^n) &= \frac{\cov(M_{k,1}^n, M_{k,2}^n)}{n\sqrt{\var(M_{k,1}^n)\var(M_{k,2}^n)}}, \qquad &i \geq 2 \\
	\cov(Q_{k,i}^n, Q_{k,j}^n) &= \frac{\cov(M_{k,2}^n, M_{k,3}^n)}{n\var(M_{k,2}^n)},  & i,j \geq 2, \quad i \neq j. \label{eq:CovQ}
\end{align}

Define $V_k^n = \sum_{i=1}^d a_i Q^n_{k,i}$.  Then,
\begin{align*}
	\sum_{k=1}^n V_k^n &= \sum_{k=1}^n\sum_{i=1}^d a_i Q_{k,i}^n \\
	& = \sum_{i=1}^d a_i \sum_{k=1}^n \frac{M_{k,i}^n - E(M_{k,i}^n)}{\sqrt{n\var(M_{k,i}^n)}} \\
	& = \sum_{i=1}^d a_i \frac{X_i^n - nE(M_{k,i}^n)}{\sqrt{n\var(M_{k,i}^n)}} \\
	& = a^TY^n.
\end{align*}
Thus, proving \eqref{eq:d_univariate} is equivalent to proving $\sum_{k=1}^n V_k^n \Rightarrow N(0, a^Ta)$.  Now consider the triangular array $\{V_k^n,\ k = 1,\ldots,n,\ n = 1, 2, \ldots\}$.  Using \eqref{eq:EQ}--\eqref{eq:CovQ}, we have that $E(V_k^n) = 0$ and
\begin{align*}
	\var(V_k^n) &= \sum_{i=1}^d \var(a_i Q^n_{k,i}) + \sum_{i\neq j}\cov(a_i Q^n_{k,i}, a_j Q^n_{k,j}) \\
	&= \frac1{n}\sum_{i=1}^d a_i^2 
		+ \frac{2}{n} \Biggl[\frac{\cov(M_{k,1}^n, M_{k,2}^n)}{\sqrt{\var(M_{k,1}^n)\var(M_{k,2}^n)}}\sum_{i=2}^d a_1 a_i 
		+ \frac{\cov(M_{k,2}^n, M_{k,3}^n)}{\var(M_{k,2}^n)} \sum_{\substack{i,j \geq 2 \\ i < j}} a_i a_j \Biggr].
\end{align*}
Therefore,
\begin{align}
	s^2_n &= \var\biggl(\sum_{k=1}^n V_k^n\biggr) \notag \\
		&= n\var(V_1^n) \notag \\
		&= \sum_{i=1}^d a_i^2 
		+ 2\Biggl[\frac{\cov(M_{k,1}^n, M_{k,2}^n)}{\sqrt{\var(M_{k,1}^n)\var(M_{k,2}^n)}}\sum_{i=2}^d a_1 a_i 
		+ \frac{\cov(M_{k,2}^n, M_{k,3}^n)}{\var(M_{k,2}^n)} \sum_{\substack{i, j \geq 2 \\ i < j}} a_i a_j \Biggr]. \label{eq:s^2}
\end{align}

Using equations \eqref{eq:VarM}--\eqref{eq:CovM} for the multinomial model, or Proposition~\ref{prop:poisson_moments} for the Poisson model, we have that the coefficients before both sums in the square brackets in the last expression converge to zero as $n \rightarrow \infty$, as the numerator in each is $O(1/r^2)$, and the denominator is $O(1/r)$.  Therefore, $s^2_n \rightarrow \sum_{i=1}^d a_i^2$ as $n \rightarrow \infty$.

\medskip

By the Lindeberg--Feller Central Limit Theorem, a sufficient condition for 
\begin{equation}\label{eq:LF}
	\frac1{s_n}\sum_{k=1}^n V_k^n \Rightarrow N(0,1)
\end{equation}
is the Lyapunov condition:
\begin{equation*}
	\text{there exists $\delta > 0$ such that} \quad \frac1{s_n^{2 + \delta}}\sum_{k=1}^n E\bigl(|V_k^n|^{2+\delta} \bigr) \rightarrow 0 \quad\text{as } n \rightarrow \infty.
\end{equation*}
We now show that the condition holds for $\delta = 1$.  Since $V_1^n,\ldots,V_n^n$ are identically distributed for each $n$, the Lyapunov condition in the case $\delta = 1$ reduces to
\begin{equation*}
 \frac{n}{s_n^3}E\bigl(|V_1^n|^3\bigr) \rightarrow 0 \quad\text{as } n \rightarrow \infty.
\end{equation*}

Using Minkowski inequality and the fact that the $M_{k,i}^n$ are non-negative, we have that
\begin{align*}
 \frac{n}{s_n^3}E\bigl(|V_1^n|^3\bigr) 
	& = \frac{n}{s_n^3}E\Biggl(\Biggl|\sum_{i=1}^d a_i \frac{M_{1,i}^n - E(M_{1,i}^n)}{\sqrt{n\var(M_{1,i}^n)}}\Biggr|^3\Biggr) \\
	& = \frac1{s_n^3} E\Biggl(\Biggl|\sum_{i=1}^d \frac{a_i[M_{1,i}^n - E(M_{1,i}^n)]}{n^{1/6}\sqrt{\var(M_{1,i}^n)}}\Biggr|^3\Biggr) \\
	& \leq \frac1{s_n^3} \Bigg[\sum_{i=1}^d \Biggl(E\Biggl|\frac{a_i[M_{1,i}^n - E(M_{1,i}^n)]}{n^{1/6}\sqrt{\var(M_{1,i}^n)}}\Biggr|^3\Biggr)^{\!1/3}\, \Bigg]^3 \\
	& \leq \frac1{s_n^3} \Bigg[\sum_{i=1}^d \Biggl(E\Biggl|\frac{a_i M_{1,i}^n}{n^{1/6}\sqrt{\var(M_{1,i}^n)}}\Biggr|^3\Biggr)^{\!1/3}\, \Bigg]^3.
\end{align*}

Recall that $s_n^2$ converges to a constant.  Thus, since the sum in the last expression involves a finite and fixed number of summands, to show that the last expression converges to zero, it is enough to show that for each $i$, the expression
\begin{equation}\label{eq:summand}
 E\Biggl|\frac{a_i M_{1,i}^n}{n^{1/6}\sqrt{\var(M_{1,i}^n)}}\Biggr|^3 = \frac{|a_i^3|E[(M_{1,i}^n)^3]}{n^{1/2}[\var(M_{1,i}^n)]^{3/2}} \\
\end{equation}
converges to zero.  Indeed, using equations~\eqref{eq:EM}--\eqref{eq:CovM} for the multinomial model, or Proposition~\ref{prop:poisson_moments} for the Poisson model, we have that $E[(M_{1,i}^n)^3]$ converges to zero at rate $1/r$, whereas $[\var(M_{1,i}^n)]^{3/2}$ does so at rate $1/r^{3/2}$.  Thus, the entire right-hand side of the last displayed equation is of order $(r/n)^{1/2}$, and since we assumed that $n/r \rightarrow \infty$, the Lyapunov condition is satisfied.

\medskip

We have shown that \eqref{eq:LF} holds.  What we need to show is \eqref{eq:d_univariate}, which may be written equivalently as
\begin{equation*}
	\frac1{\|a\|_2}\sum_{k=1}^n V_k^n \Rightarrow N(0,1).
\end{equation*}
However, since $s_n \rightarrow (\sum_{i=1}^d a_i^2)^{1/2} = \|a\|_2$, the proposition is proved.
\end{proof}

\section{Approximating the Probability of Discovery}

Under either the multinomial or the Poisson model, evaluating the exact probability of discovery $\pdisc(\alpha)$ in~\eqref{eq:pdisc} is difficult, because of the dependence among the $X_i$.  We therefore use the asymptotic result of the previous section to derive an approximation to $\pdisc(\alpha)$.  For fixed $n$ and $r$, let $\widetilde{X}_1,\ldots,\widetilde{X}_r$ be independent normal random variables, with $E(\widetilde{X}_i) = E(X_i)$ and $\var(\widetilde{X}_i) = \var(X_i)$.  By Proposition~\ref{prop:normal_approx}, when both $n$ and $r$ are large, but $r \ll n$, these $\widetilde{X}_1,\ldots,\widetilde{X}_r$ may serve as approximations to $X_1,\ldots,X_r$.  Let $\phi_1$ be the density function of $\widetilde{X}_1$, and $\Phi_2$ be the CDF of $\widetilde{X}_i$ for $i \geq 2$ (recall that $X_2,\ldots,X_r$ are identically distributed); note that both $\phi_1$ and $\Phi_2$ depend on $\alpha$.  Also let $\widetilde{X}_{[1]} \geq \widetilde{X}_{[2]} \geq \ldots \geq \widetilde{X}_{[r-1]}$ be the order statistics of $\widetilde{X}_2, \ldots, \widetilde{X}_r$, and define 
\begin{equation}\label{eq:pdisct}
	\pdisct(\alpha) = P(\widetilde{X}_{[v]} < \widetilde{X}_1)
\end{equation}
to be the approximation to the probability of discovery $\pdisc(\alpha)$ in~\eqref{eq:pdisc}.
\begin{prop}
\begin{equation*}
\pdisct(\alpha) = \int_{-\infty}^\infty \sum_{j=r-v}^{r-1} \binom{r-1}{j}[\Phi_2(x)]^j [1 - \Phi_2(x)]^{r-j-1} \phi_1(x)\, dx.
\end{equation*}
\end{prop}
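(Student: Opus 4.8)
The plan is to compute $\pdisct(\alpha) = P(\widetilde{X}_{[v]} < \widetilde{X}_1)$ by conditioning on the value of the target variable $\widetilde{X}_1$ and exploiting the fact that $\widetilde{X}_1, \widetilde{X}_2, \ldots, \widetilde{X}_r$ are \emph{independent}, with $\widetilde{X}_2,\ldots,\widetilde{X}_r$ being identically distributed with common CDF $\Phi_2$. First I would write, by the law of total probability applied to the continuous variable $\widetilde{X}_1$ with density $\phi_1$,
\begin{equation*}
	\pdisct(\alpha) = \int_{-\infty}^\infty P\bigl(\widetilde{X}_{[v]} < \widetilde{X}_1 \,\big|\, \widetilde{X}_1 = x\bigr)\,\phi_1(x)\,dx = \int_{-\infty}^\infty P\bigl(\widetilde{X}_{[v]} < x\bigr)\,\phi_1(x)\,dx,
\end{equation*}
where in the second equality I use that $\widetilde{X}_{[v]}$, being a function of $\widetilde{X}_2,\ldots,\widetilde{X}_r$ alone, is independent of $\widetilde{X}_1$. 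The whole problem thus reduces to evaluating $P(\widetilde{X}_{[v]} < x)$ for a fixed real $x$.

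The key step is the combinatorial translation of this order-statistic event into a binomial tail. Since the $\widetilde{X}_{[j]}$ are sorted from largest to smallest, $\widetilde{X}_{[v]}$ is the $v$th \emph{largest} of the $r-1$ non-target variables. The event $\{\widetilde{X}_{[v]} < x\}$ therefore holds if and only if at most $v-1$ of the non-target variables are $\geq x$, equivalently, if and only if \emph{at least} $(r-1)-(v-1) = r-v$ of them are strictly below $x$. Because the $\widetilde{X}_i$ for $i \geq 2$ are i.i.d.\ with CDF $\Phi_2$, the number of them falling below $x$ is a $\mathrm{Binomial}(r-1,\Phi_2(x))$ random variable, so
\begin{equation*}
	P\bigl(\widetilde{X}_{[v]} < x\bigr) = \sum_{j=r-v}^{r-1} \binom{r-1}{j}[\Phi_2(x)]^j\,[1 - \Phi_2(x)]^{r-j-1}.
\end{equation*}
Substituting this into the integral above yields the claimed formula.

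I do not expect a serious obstacle here; the only point requiring a little care is the bookkeeping of the order-statistic-to-binomial conversion, namely getting the summation range $j = r-v,\ldots,r-1$ and the exponent $r-j-1$ right, and making sure the direction of the sort (largest to smallest) is handled correctly. The passage between the strict inequality in the event $\{\widetilde{X}_{[v]} < x\}$ and the counting statement is justified by the continuity of the normal distribution: since the $\widetilde{X}_i$ are normal, ties and boundary coincidences occur with probability zero, so strict versus non-strict inequalities may be used interchangeably. With that remark in place, the derivation is a routine application of conditioning and the binomial count.
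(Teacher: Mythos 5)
Your proposal is correct and follows essentially the same route as the paper: condition on $\widetilde{X}_1 = x$, use independence, and apply the binomial representation of the CDF of the $v$th largest order statistic of the i.i.d.\ variables $\widetilde{X}_2,\ldots,\widetilde{X}_r$. The only difference is that the paper cites Serfling for the order-statistic CDF while you derive it directly (and explicitly note that continuity makes strict versus non-strict inequalities immaterial), which is a harmless elaboration.
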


\begin{proof}
Because $\widetilde{X}_2,\ldots,\widetilde{X}_r$ are iid with CDF $\Phi_2$, the CDF of $\widetilde{X}_{[v]}$ is
\begin{equation*}
P(\widetilde{X}_{[v]} \leq x) = \sum_{j=r-v}^{r-1} \binom{r-1}{j}[\Phi_2(x)]^j [1 - \Phi_2(x)]^{r-j-1}.
\end{equation*}
See p.\ 87 of \citet{serfling:approx_thm:1981}.  Conditioning on the value of $\widetilde{X}_1$ in the right-hand side of~\eqref{eq:pdisct} and integrating with respect to its density $\phi_1$, the proposition is proved.
\end{proof}

The dashed curve in Figure~\ref{fig:alpha_vs_detect} shows $\pdisct(\alpha)$, the approximate probability of discovery, as a function of $\alpha$. The solid curve is the true probability of discovery, $\pdisc(\alpha)$, as estimated by simulation.  The system parameters are $r = 200$ genes, $n = 40,000$ cells, $v = 3$ genes to be validated, fluorescence distributions $G_1 = N(0.4, 1)$ and $G_2 = N(0, 1)$, and the multinomial model. 

\begin{figure}[htp]
\centering
\includegraphics[width=0.5\textwidth]{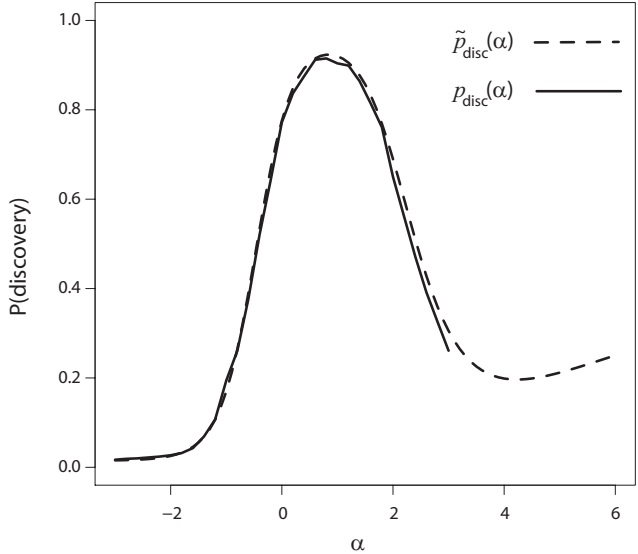}
\caption{The probability of discovering the target gene as a function of the selection threshold $\alpha$.  Solid curve is the true probability of discovery for the exact system, $\pdisc(\alpha)$, as estimated by simulation.  The dashed curve is the approximate probability of discovery, $\pdisct(\alpha)$.}\label{fig:alpha_vs_detect}
\end{figure}

The main feature of Figure~\ref{fig:alpha_vs_detect} is that the approximate curve follows the exact curve very closely.  Thus, the asymptotics-based approximation works well in practice.  The optimal threshold $\alpha^\ast$ for this system parameters is about $0.8$.  Note also that the curve of $\pdisc(\alpha)$ is plotted only for $\alpha \leq 3$; the reason for this is that for $\alpha > 3$ the selection criterion is too stringent, so that in practice no cells satisfy it, and the construct counts --- which are always integer in the exact system --- are all zero.  In contrast, the counts in the approximate system are continuous random variables, which may assume near-zero values, and so $\pdisct(\alpha)$ can be computed for any $\alpha$.

Figure~\ref{fig:other_params} shows how the curve $\pdisct(\alpha)$ changes with the system parameters.  The left panel shows the influence of the separation between the two fluorescence distributions $G_1$ and $G_2$, and the right panel the influence of $v$, the number of genes to be validated.  As expected, better separation and higher $v$ result in higher probabilities of discovery.

\begin{figure}[htp]
\centering
\includegraphics[width=\textwidth]{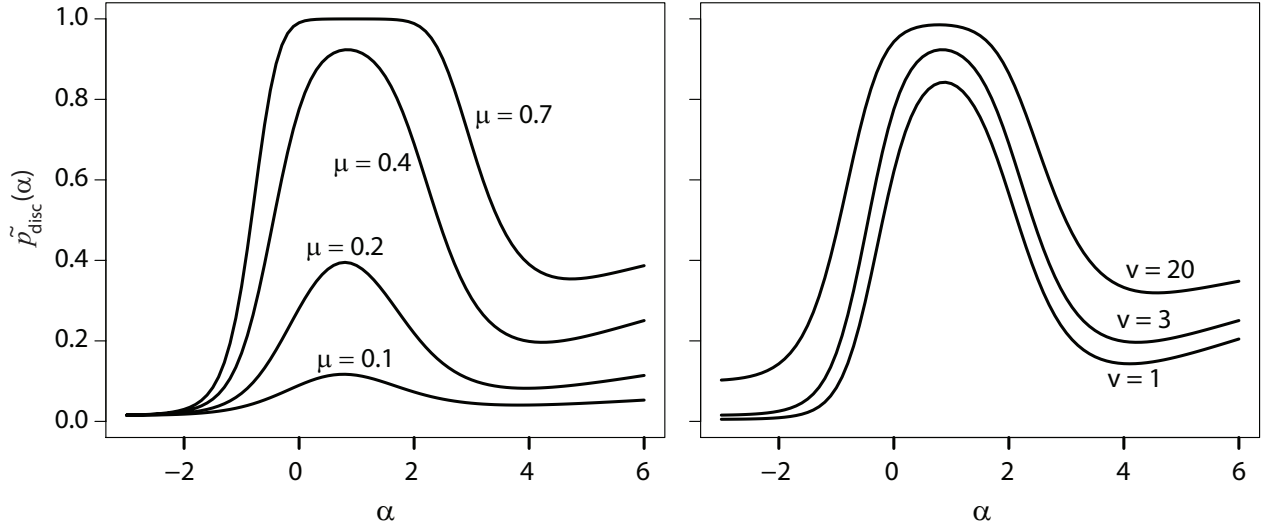}
\caption{The approximate probability of discovery $\pdisct(\alpha)$ as a function of $\alpha$, for various system parameters.  Left panel: $G_1 = N(\mu, 1)$ for various mean values $\mu$, and $G_2 = N(0, 1)$.  Right panel: various values of $v$, the number of genes to be validated.}\label{fig:other_params}
\end{figure}

\section{The Two-Stage Discovery Problem}\label{sec:2-stage}

After enriching the target cells by FACS, it is possible to enrich them further, by first growing the selected cells until their population is large enough, and then processing them in a second FACS round.

We model this two-stage process as follow.  We let $n$ be the number of cells processed by FACS in the first stage.  As before, $N_{k,i}$ denotes the number of constructs of type $i$ inserted into cell $k$ (according to either the multinomial model or the Poisson model), and $F_k$ denotes the fluorescence intensity of cell $k$, which is determined according to~\eqref{eq:F_CDF}.  The selection criterion for cell $k$ in the first stage is $F_k > \alpha$, for some threshold $\alpha$.  Each selected cell from the first stage gives rise to $L$ descendant cells, to be processed in the second stage; all $L$ descendants of the same ancestor cell inherit the RNAi construct content of their ancestor.  The fluorescence intensity of the $l$th descendant of cell $k$ is measured in the second FACS stage, and is denoted by $F_{k,l}$; the distribution of $F_{k,l}$ is the same as that of the ancestor cell, i.e.,
\begin{equation*}
P(F_{k,l} \leq a) = 
	\begin{cases}
		G_1(a) & N_{k,1} \geq 1,\\
		G_2(a) & N_{k,1} = 0.
	\end{cases}
\end{equation*}
For each ancestor cell $k$, the $L$ fluorescence intensities $F_{k,1}, \ldots, F_{k,L}$ of the $L$ descendant cells are conditionally independent given $N_{k,1}$.  The selected cells in the second stage are those satisfying $F_{k,l} > \beta$, for some threshold $\beta$.  The final construct counts are given by
\begin{equation*}
X_i = \sum_{k=1}^n \sum_{l=1}^L N_{k,i} I_{\{F_k > \alpha,\ F_{k,l} > \beta\}}, \qquad i = 1,\ldots,r.
\end{equation*}

Let $T_{k,i} = N_{k,i}I_{\{F_k > \alpha\}} \sum_{l=1}^L I_{\{F_{k,l} > \beta\}}$, so that $X_i = \sum_{k=1}^n T_{k,i}$.  The $T_{k,i}$ are thus the counterparts of the $M_{k,i}$ from the single-stage problem.

\begin{prop}\label{prop:T_moments}
Under the multinomial model, the $T_{k,i}$ satisfy
\begin{align*}
	E(T_{k,1})& = \frac1{r}L\overline{G}_1(\alpha)\overline{G}_1(\beta) \\
	E(T_{k,i})& = \frac1{r}L\overline{G}_2(\alpha)\overline{G}_2(\beta) \qquad i \geq 2 \\
	\var(T_{k,1}) &= \frac1{r}\overline{G}_1(\alpha)\bigl[LG_1(\beta)\overline{G}_1(\beta) + L^2\overline{G}_1(\beta)^2\bigr] - \frac1{r^2}\overline{G}_1(\alpha)^2 L^2\overline{G}_1(\beta) \\
	\var(T_{k,i}) &= \frac1{r}\overline{G}_2(\alpha)\bigl[LG_2(\beta)\overline{G}_2(\beta) + L^2\overline{G}_2(\beta)^2\bigr] - \frac1{r^2}\overline{G}_2(\alpha)^2 L^2\overline{G}_2(\beta) \qquad i \geq 2 \\
	\cov(T_{k,1}, T_{k,i}) &= -\frac1{r^2} L^2 \overline{G}_1(\alpha)\overline{G}_1(\beta)\overline{G}_2(\alpha)\overline{G}_2(\beta) \qquad i \geq 2 \\
	\cov(T_{k,i}, T_{k,j}) &= -\frac1{r^2} L^2 \overline{G}_2(\alpha)^2\overline{G}_2(\beta)^2 \qquad i,j \geq 2, \quad i\neq j\\
	E[(T_{k,1})^3] &= \frac1{r}L\overline{G}_1(\alpha)\overline{G}_1(\beta)\bigl(L^2\overline{G}_1(\beta)^2 - 3L\overline{G}_1(\beta)^2 + 2\overline{G}_1(\beta)^2 + 3L\overline{G}_1(\beta) - 3\overline{G}_1(\beta) + 1\bigr) \\
	E[(T_{k,i})^3] &= \frac1{r}L\overline{G}_2(\alpha)\overline{G}_2(\beta)\bigl(L^2\overline{G}_2(\beta)^2 - 3L\overline{G}_2(\beta)^2 + 2\overline{G}_2(\beta)^2 + 3L\overline{G}_2(\beta) - 3\overline{G}_2(\beta) + 1\bigr) \quad i \geq 2\\
\end{align*}
\end{prop}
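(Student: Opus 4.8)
The plan is to reduce every moment of $T_{k,i}$ to a moment of a single binomial random variable. Write $S_k = \sum_{l=1}^L I_{\{F_{k,l}>\beta\}}$ for the number of descendants of cell $k$ surviving the second selection, so that $T_{k,i} = N_{k,i}\, I_{\{F_k>\alpha\}}\, S_k$. Under the multinomial model both $N_{k,i}$ and $I_{\{F_k>\alpha\}}$ take values in $\{0,1\}$, hence they are idempotent and for every integer $m\ge1$ we have $T_{k,i}^m = N_{k,i}\, I_{\{F_k>\alpha\}}\, S_k^m$. Consequently every raw moment factors as $E[T_{k,i}^m] = P(N_{k,i}=1)\,E\bigl[I_{\{F_k>\alpha\}}\,S_k^m \mid N_{k,i}=1\bigr]$, so the whole proposition is obtained by computing $E[S_k^m]$ for $m=1,2,3$ and assembling.

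First I would pin down the relevant conditional distributions. Because each cell carries exactly one construct, the event $N_{k,1}=1$ means the construct is of type $1$ and the cell's fluorescences follow $G_1$, whereas for $i\ge2$ the event $N_{k,i}=1$ forces $N_{k,1}=0$, so the fluorescences follow $G_2$. Given the construct content I take the ancestor intensity $F_k$ and the descendant intensities $F_{k,1},\dots,F_{k,L}$ to be mutually conditionally independent, as the model specifies for the descendants and as is natural for the separate first-stage measurement $F_k$. Thus, conditional on gene $1$ being present, $I_{\{F_k>\alpha\}}\sim\mathrm{Bern}(\overline{G}_1(\alpha))$ independently of $S_k\sim\mathrm{Bin}(L,\overline{G}_1(\beta))$, and conditional on a gene $i\ge2$ being present the same holds with $G_2$ replacing $G_1$. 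Since $P(N_{k,i}=1)=1/r$, the conditional expectation factors into $\tfrac1r\,\overline{G}_\ast(\alpha)\,E[S_k^m]$, where $\ast\in\{1,2\}$ selects the appropriate distribution.

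With this reduction the first two formulas are immediate from $E[S_k]=L\overline{G}_\ast(\beta)$, and the variance follows from $E[S_k^2]=L\overline{G}_\ast(\beta)G_\ast(\beta)+L^2\overline{G}_\ast(\beta)^2$ after subtracting the square of the mean. For the third raw moment I would use the factorial-moment identity
\[
 S_k^3 = S_k(S_k-1)(S_k-2) + 3\,S_k(S_k-1) + S_k,
\]
together with $E[S_k(S_k-1)\cdots(S_k-j+1)] = L(L-1)\cdots(L-j+1)\,\overline{G}_\ast(\beta)^j$ for a $\mathrm{Bin}(L,\overline{G}_\ast(\beta))$ variable; collecting powers of $\overline{G}_\ast(\beta)$ reproduces the stated cubic polynomial. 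The covariances are even simpler and rest on one structural fact: in the multinomial model a cell holds a single construct, so $N_{k,i}N_{k,j}=0$ whenever $i\ne j$, whence $T_{k,i}T_{k,j}=0$ and $E[T_{k,i}T_{k,j}]=0$. Therefore $\cov(T_{k,i},T_{k,j}) = -E[T_{k,i}]\,E[T_{k,j}]$, and substituting the means gives the two covariance formulas.

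The computation is essentially bookkeeping once the idempotency-plus-conditioning reduction is in place; no inequality or limiting argument is needed. The only genuine modeling point is the conditional independence of the first-stage indicator $I_{\{F_k>\alpha\}}$ and the second-stage count $S_k$ given $N_{k,1}$, which I would state explicitly as an assumption implicit in the model description. The most error-prone step is the algebra for $E[S_k^3]$, where the factorial-moment identity keeps the bookkeeping organized and makes matching the six-term polynomial in $\overline{G}_\ast(\beta)$ transparent.
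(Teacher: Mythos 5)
Your proof is correct and is essentially the paper's own argument: conditional on the event $N_{k,i}I_{\{F_k>\alpha\}}=1$, which has probability $\overline{G}_1(\alpha)/r$ for $i=1$ and $\overline{G}_2(\alpha)/r$ for $i\geq 2$, the count $T_{k,i}$ is $\mathrm{Bin}(L,\overline{G}_1(\beta))$ or $\mathrm{Bin}(L,\overline{G}_2(\beta))$ respectively and is zero otherwise, so every raw moment is the corresponding binomial moment scaled by that probability, and the covariances follow from $T_{k,i}T_{k,j}=0$ for $i\neq j$. The only point worth flagging is that your (correct) subtraction of the squared mean gives the last term of $\var(T_{k,1})$ as $\frac{1}{r^2}\overline{G}_1(\alpha)^2L^2\overline{G}_1(\beta)^2$, so the missing square on $\overline{G}_1(\beta)$ (and on $\overline{G}_2(\beta)$) in the proposition's displayed variance formulas appears to be a typo in the statement rather than a defect in your argument.
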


\begin{proof}
Under the multinomial model, $T_{k,i}$ is binomial conditional on $N_{k,i}I_{\{F_k > \alpha\}} = 1$, and zero otherwise:
\begin{align*}
T_{k,1}\,|\,N_{k,1}I_{\{F_k > \alpha\}} &= 1 \ \sim \ \textrm{Bin}(L, \overline{G}_1(\beta)) \\ 
T_{k,i}\,|\,N_{k,i}I_{\{F_k > \alpha\}} &= 1 \ \sim \ \textrm{Bin}(L, \overline{G}_2(\beta)) \quad i \geq 2.
\end{align*}
The moments of $T_{k,i}$ are then the well known moments of the binomial distribution, multiplied either by $P(N_{k,1}I_{\{F_k > \alpha\}} = 1) = \overline{G}_1(\alpha)/r$ for $i = 1$, or by $P(N_{k,i}I_{\{F_k > \alpha\}} = 1) = \overline{G}_2(\alpha)/r$ for $i \geq 2$.
\end{proof}

As in the single-stage problem, we let both $n$ and $r = r(n)$ approach infinity, and define the normalized construct count $Y_i^n$ through~\eqref{eq:Yn}, and the process $\mathbf{Y}^n$ through~\eqref{eq:Yn_vector}.  The following result is the two-stage counterpart of Proposition~\ref{prop:normal_approx}, and asserts that the scaled construct counts are again asymptotically normal and independent.

\begin{prop}\label{prop:normal_approx_2}
Under both the multinomial and the Poisson models, and for fixed thresholds $\alpha$ and $\beta$, if $r(n) \to \infty$ and $n/r(n) \rightarrow \infty$ as $n \to \infty$, then 
\begin{equation*}
	\mathbf{Y}^n \Rightarrow (Z_1, Z_2,\ldots) \quad\text{as } n \to \infty,
\end{equation*}
where the $Z_i$ are independent standard normal random variables.
\end{prop}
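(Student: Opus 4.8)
The plan is to follow the proof of Proposition~\ref{prop:normal_approx} essentially verbatim, because the two-stage construct counts $X_i^n = \sum_{k=1}^n T_{k,i}^n$ retain exactly the triangular-array structure that the single-stage argument exploited: the vectors $(T_{k,1}^n,\ldots,T_{k,r}^n)$ are independent across the ancestor cells $k$ (constructs are inserted independently into distinct cells, and the $L$ descendants of a cell inherit only that cell's content and have conditionally independent fluorescences), while for each fixed $k$ they are identically distributed. So I would again invoke Theorem 1.4.8 of \citet{vandervaart_wellner:weak_conv:1996} to reduce to finite-dimensional convergence, use the Cram\'er--Wold device to reduce to $a^TY^n \Rightarrow N(0, a^Ta)$ for each $a \in \mathbb{R}^d$, and set $Q_{k,i}^n = (T_{k,i}^n - E(T_{k,i}^n))/\sqrt{n\,\var(T_{k,i}^n)}$ and $V_k^n = \sum_{i=1}^d a_i Q_{k,i}^n$, so that $\sum_k V_k^n = a^TY^n$ and the problem becomes a Lindeberg--Feller statement for the array $\{V_k^n\}$.

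The two substantive checks are the variance normalization and the Lyapunov condition, and both reduce to confirming that the moments of $T_{k,i}^n$ in Proposition~\ref{prop:T_moments} carry the same orders in $r$ as the $M_{k,i}^n$ did. For $s_n^2 = n\,\var(V_1^n) \to \|a\|_2^2$ I need the cross-gene correlation coefficients to vanish: from Proposition~\ref{prop:T_moments} the covariances $\cov(T_{k,1}, T_{k,i})$ and $\cov(T_{k,i}, T_{k,j})$ are $O(1/r^2)$, while $\var(T_{k,i})$ is $\Theta(1/r)$ (the $\frac1r$ bracket term dominates the $\frac1{r^2}$ correction, with a positive leading coefficient), so each correlation coefficient is $O(1/r) \to 0$, exactly as in the single-stage case. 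For the Lyapunov ratio with $\delta = 1$, the Minkowski-inequality bound from the earlier proof transfers unchanged and leaves me to show that each summand $\frac{|a_i|^3 E[(T_{1,i}^n)^3]}{n^{1/2}[\var(T_{1,i}^n)]^{3/2}} \to 0$; since $E[(T_{1,i}^n)^3]$ is $\Theta(1/r)$ and $[\var(T_{1,i}^n)]^{3/2}$ is $\Theta(1/r^{3/2})$, this ratio is $O((r/n)^{1/2})$, which tends to zero under $n/r(n)\to\infty$.

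The one genuinely new point, and the only place I expect to deserve care, is confirming that the descendant structure does not alter these rates. Every moment in Proposition~\ref{prop:T_moments} carries extra factors of $L$ and $L^2$ relative to its single-stage analogue, but $L$ is a fixed constant independent of $n$, so these factors are absorbed into the $O(\cdot)$ and $\Theta(\cdot)$ constants and leave the powers of $r$ untouched; the decisive features are still that covariances decay one order of $1/r$ faster than variances and that $E[(T)^3]$ and $\var(T)$ are both $\Theta(1/r)$. Finally, since Proposition~\ref{prop:T_moments} is stated only for the multinomial model, for the Poisson case I would record the analogous moment computation --- combining the truncated-Poisson insertion counts of Proposition~\ref{prop:poisson_moments} with the $\mathrm{Bin}(L, \overline{G}(\beta))$ second-stage selection --- and observe that it produces the identical orders in $r$, so the argument applies unchanged. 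With $s_n \to \|a\|_2$, rescaling yields $\frac1{\|a\|_2}\sum_k V_k^n \Rightarrow N(0,1)$, which is \eqref{eq:d_univariate} and completes the proof.
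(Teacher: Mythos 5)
Your proposal is correct and follows essentially the same route as the paper's own proof: reuse the single-stage argument with $T_{k,i}$ in place of $M_{k,i}$, and re-verify only the two order-of-magnitude facts (covariances $O(1/r^2)$ versus variances $\Theta(1/r)$ for the normalization $s_n^2 \to \|a\|_2^2$, and the $O((r/n)^{1/2})$ bound on the Lyapunov ratio). Your added remark that the paper states Proposition~\ref{prop:T_moments} only for the multinomial model --- and that the Poisson case needs an analogous moment computation with the same orders in $r$ --- matches the paper, which likewise proves only the multinomial case ``for brevity.''
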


\begin{proof}
For brevity, we prove the proposition only for the multinomial model.  The proof follows the exact same steps as that of Proposition~\ref{prop:normal_approx}, with the $T_{k,i}$ replacing the $M_{k,i}$.  Only two points need to be reestablished:  The first is that the coefficients before both sums in the square brackets in equation~\eqref{eq:s^2} converge to zero as $n \rightarrow \infty$; this is true since by Proposition~\ref{prop:T_moments}, we again have that the numerator of each is $O(1/r^2)$, whereas the denominator is $O(1/r)$.  The second point is that for each $i$, the expression at the right-hand side of equation~\eqref{eq:summand} converge to zero as $n \rightarrow \infty$; this is again true since by Proposition~\ref{prop:T_moments}, that expression is $O((r/n)^{1/2})$, and by assumption, $n/r \rightarrow \infty$.
\end{proof}

The decision variables in the two-stage model are the thresholds $\alpha$ and $\beta$.  Clearly, it is desirable to enrich the first-stage selected cells as much as possible, and this can be done by raising $\alpha$.  However, as in the single-stage problem, setting $\alpha$ too high may result in no target cells (and hence no constructs of type 1) among the selected cells.  We resolve this conflict by maximizing $\alpha$ subject to a constraint that ensures that the number of selected target cells is high enough.  Let 
\begin{equation*}
	W_1 = W_1 (\alpha) = \sum_{k=1}^n I_{\{F_k > \alpha,\ N_{k,1} \geq 1\}}
\end{equation*}
be the number of target cells selected by the FACS.  Under the multinomial model, for example, $W_1 \sim \textrm{Bin}(n , \overline{G}_1(\alpha)/r)$.  We may then set $\alpha$ to be maximal subject to $E(W_1) \geq b$, or to $P(W_1 \geq b) \geq 1 - \epsilon$, for some $b$ and $\epsilon$.

As in the single-stage problem, we let $\pdisct(\alpha, \beta)$ be the approximate probability of discovery, based on the normal approximation from Proposition~\ref{prop:normal_approx_2}.  The solid curve in Figure~\ref{fig:detect_vs_alpha_single_vs_two_stages} depicts $\pdisct(\alpha^\ast, \beta)$ as a function of $\beta$, where $\alpha^* = 0.55$ is the maximal $\alpha$ satisfying $E(W_1) \geq 10$, and for a multinomial system with parameters $r = 200$, $n = 5000$, $v = 3$, $G_1 = N(0.3, 1)$, and $G_2 = N(0,1)$.  The parameter $L$ was set to 4, the value required so that the expected number of cells processed in the second stage is roughly 5000.  The dashed curve is $\pdisct(\alpha)$ as a function of $\alpha$ for a single-stage system with the same parameters, except for $n = 10000$ (so that the total number of cells processed by FACS in the two systems is roughly same).  Dividing the screening between two stages improves significantly the probability of discovering the target gene: the maximal probability of discovery in the two-stage system is 0.64 (achieved by $\beta^\ast = 0.1$), whereas in the single-stage system, it is 0.28 (achieved by $\alpha^\ast = 0.9$).
\begin{figure}[htp]
\centering
\includegraphics[width=0.5\textwidth]{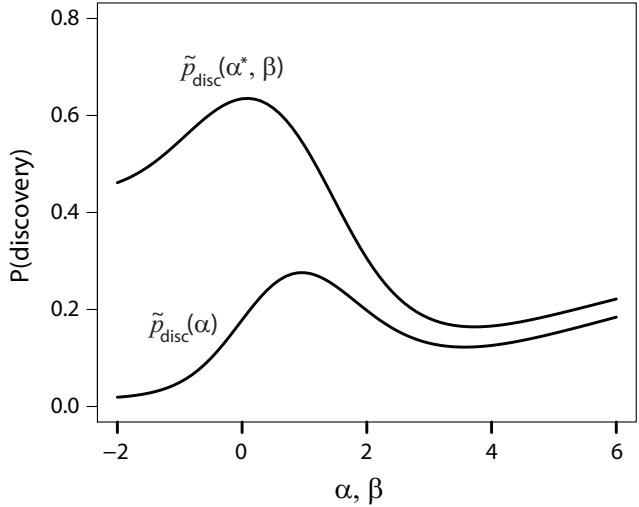}
\caption{The probability of discovery as a function of $\alpha$ in a single-stage system (lower curve) and as a function of $\beta$ in a two-stage system (upper curve).}\label{fig:detect_vs_alpha_single_vs_two_stages}
\end{figure}

\section{Discussion}

In this paper we modeled and analyzed probabilistically FACS-based RNAi genetic screening experiments.  The key decision variable in the analysis is the FACS selection threshold $\alpha$, which needs to be set optimally so as to maximize the probability of discovering the target gene.  This probability of discovery is determined by two factors: the number of the selected cells, and the enrichment level (the proportion of the target cells among the selected cells).  The strong law of large numbers guarantees that when the enrichment level is fixed, the probability of discovery approaches 1 as the number of the selected cells increases; clearly, when the number of selected cells is fixed, increasing the enrichment level also results in a higher probability of discovery.  Raising $\alpha$, therefore, has two contradicting effects on the probability of discovery, as it both decreases the number of selected cells, and increases the enrichment level.  The optimal $\alpha^\ast$ balances these opposing requirements, and can be determined through our normal approximation.

The two fluorescence distributions $G_1$ and $G_2$ were not assumed to be of any specific type in our analysis.  \citet{furusawa_ea:ubiquity:2005} advocate using a log-normal distribution to model FACS fluorescence readings.  However, since the FACS selection process is ordinal, the entire analysis is invariant under monotonically increasing transformations of the fluorescence distributions.  Log-normal distributions may thus be converted to normal ones, as we used in our simulations.

In Section~\ref{sec:2-stage} of this paper we studied a two-stage version of the discovery problem.  In principle, it is possible to repeat the enrichment--reproduction process multiple times, rather than just two, to increase further the probability of discovery.  However, each such repetition increases the likelihood of introducing a contamination into the cell population, in which case the entire experiment is lost.  We follow therefore~\citet{bassik_ea:rapid:2009}, and study only the single- and two-stage versions of the problem.

This paper is concerned with the stochastic modeling and analysis of RNAi experiments, and the statistical aspects of the problem are beyond its scope.  These aspects, however, deserve study: for example, the uncertainty resulting from estimating $G_1$ and $G_2$ can be accounted for in a more detailed analysis, and so is the noise inherent to measuring the construct counts.  We plan to study such statistical aspects, in conjunction with the above model, in a sequel to this work.

\section{Acknowledgments}
We thank Ben-Zion Levi, Nitsan Fourier, and Ofer Barnea-Yizhar from the Faculty of Biotechnology and Food Engineering at the Technion, Israel, for introducing us to the research problem and for useful discussions.  The research of Yair Goldberg was supported in part by the Israeli Science Foundation grant 1308/12.  The research of Yuval Nov was supported in part by the Israeli Science Foundation grant 286/13.

\bibliographystyle{natbib}

\end{document}